\documentclass[12pt]{amsart}

\usepackage{amssymb, verbatim, amsthm, amsfonts, amsmath}

\newtheorem{lemma}{Lemma}

\newtheorem{theorem}{Theorem}

\newtheorem{corollary}{Corollary}
\newtheorem{remark}{Remark}

\begin{document}

\title{A generalization of the map $\chi$}

\thanks{{\bf Keywords}: finite field; permutation; Chi-map }

\thanks{{\bf Mathematics Subject Classification (2010)}:12E20, 94D10.}

\author{Xiutao Feng, Qiang Wang, Jingyi Yu, Anpeng Zhang }
\thanks{The authors were supported by CAS Project for Young Scientists in Basic Research (Grant No. YSBR-035) and  the Natural Sciences and Engineering Research Council of Canada
(RGPIN-2023-04673) }

\address{State Key Laboratory of Mathematical Sciences, Academy of Mathematics and Systems Science, Chinese Academy of Sciences, Beijing, 100190, China}
\email{fengxt@amss.ac.cn}

\address{School of Mathematics and Statistics, Carleton University, 1125
Colonel By Drive, Ottawa, ON K1S 5B6, Canada}
\email{wang@math.carleton.ca}

\address{College of Science, Civil Aviation University of China, Tianjin, 300300, China}
\email{409575860@qq.com}

\address{The High School Affiliated to Renmin University of China, Beijing, 100872, China}
\email{zhanganpeng@amss.ac.cn}

\maketitle

\begin{abstract}
The mapping  $ \chi_n:\mathbb{F}_2^n \to \mathbb{F}_2^n$ defined by $y=\chi_n(x)$ with $y_i = x_i + x_{i+1}x_{i+2} + x_{i+2}$, where the indices are computed modulo $n$, has been widely studied for its application in lightweight cryptography.  In this paper, we generalize this mapping  and completely characterize all these shift-invariant permutations of the form $y_i=x_{i+u}+x_{i+v}(x_{i+w}+a_i)$, where $0\le u, v, w<n$ and $a_i\in \mathbb{F}_2$, $1\le i\le n$. 
\end{abstract}

\section{Introduction}

In order to find a shift-invariant transformation over $\mathbb{F}_2^n$ that is easy to implement and has good cryptographic properties,  Daemen \cite{Daemen1995} introduced a mapping
$ \chi_n:\mathbb{F}_2^n \to \mathbb{F}_2^n$ defined by $y=\chi_n(x)$ with 
\[
y_i = x_i + x_{i+1}x_{i+2} + x_{i+2},  \qquad 1 \le i \le n,
\]
 where the indices are computed modulo $n$.  
 It is known that $\chi_n:\mathbb{F}_2^n \to \mathbb{F}_2^n$ is bijective if and only if $n$ is odd \cite{Daemen1995}. The mapping $\chi_n$ has been used in several cryptographic primitives, including Keccak-f \cite{Nis08} in the NIST SHA-3 standard \cite{Nis15} and the winner of the NIST lightweight competition ASCON \cite{Nis25}. 
A closed-form expression for $\chi_n^{-1}$ was obtained by Liu et al. \cite{MR4493714}; for odd $n>0$, it is given by
\[
x_i = y_i + \sum_{j=1}^{(n-1)/2} y_{i-2j+1} \prod_{k=j}^{(n-1)/2} (y_{i-2k}+1),
\]
with indices computed modulo $n$.

 The order and cycle structure of $\chi_n$ are also known; see  Schoone and Daemen \cite{MR4741123}. They further studied algebraic properties of this map via a univariate polynomial representation under the isomorphism $\mathbb{F}_{2^n}\cong \mathbb{F}_2^n$. They also considered $\chi_n$ as a polynomial map and showed that the variant defined by
$
y_i = x_i + (x_{i+1}+1)x_{i+2}
$
does not define a bijection over extensions of $\mathbb{F}_p$ for odd $p$, nor over characteristic two extensions whose degree is divisible by $2$ or $3$. Based on these observations, they conjectured that this rule does not yield a bijection over any extension field of $\mathbb{F}_2$. This conjecture was recently confirmed by Graner et al. \cite{MR5014643} using a linear-algebraic approach. 

Various variations and generalizations of $\chi_n$ have been studied in \cite{ALPS2025, MR4909542,    KK25,MR4826048,  MR5047775, Ras2022}. For example, for odd $n$, Kriepke and Kyureghyan \cite{MR4826048} showed that all iterates of $\chi_n$ can be expressed as linear combinations of the mappings $\gamma_{2k}$ for $k\ge1$, where the $i$-th coordinate of $\gamma_{2k}$ is
\[
y_i = x_{i+2k}(x_{i+2k-1}+1)(x_{i+2k-3}+1)\cdots(x_{i+1}+1), \ i\in\{0,\ldots,n-1\}.
\]
They studied the linear span $\Gamma_n$ of the $\gamma_{2k}$ over $\mathbb{F}_2$ and showed that every permutation in $\Gamma_n$ lies in
\[
G_n = \gamma_0 + \mathrm{span}\{\gamma_2,\ldots,\gamma_{n-1}\}.
\]
Moreover, $G_n$ forms a monoid under composition, isomorphic to
\[
M_n = \left\{\,1 + \sum_{i=1}^t a_i X^i \,\right\} \subseteq \mathbb{F}_2[X]/(X^{(n+1)/2}),
\]
that is, to the unit group of the residue ring $\mathbb{F}_2[X]/(X^{(n+1)/2})$. This isomorphism enables both theoretical and algorithmic constructions of a rich family of shift-invariant permutations and reduces the study of their properties to corresponding binary polynomials.

For even $n$, Kriepke and Kyureghyan \cite{KK25} showed that the subset of bijections in $\Gamma_n$ forms an abelian group isomorphic to the unit group of $\mathbb{F}_2[X]/(X^n + X^{n/2})$. Their approach was further adapted by Lyu et al. \cite{MR5047775} to construct $\chi$-like permutations. They introduced a generalized class of mappings $\chi_{n,m}$ defined by $y=\chi_{n,m}(x)$, where
\[
y_i = x_i + x_{i+m}(x_{i+m-1}+1)(x_{i+m-2}+1)\cdots(x_{i+1}+1).
\]
For $m=2$, $\chi_{n,2}$ coincides with $\chi_n$. They further showed that $\chi_{n,m}$ lies in an abelian group isomorphic to the unit group of $\mathbb{F}_2[X]/(X^{(\lfloor n/m \rfloor+1)/2})$ when $m$ does not divide $n$.

In this paper, we  introduce another generalization of $\chi_n$. Let $n=2^k n_0$ where $k\geq 0$ and $n_0$ is odd.  Let $ 1 \leq u,  v  < n$.  Define  $\chi_{n, v}:\mathbb{F}_2^n \to \mathbb{F}_2^n$, defined by $y=\chi_{n, v} (x)$ with
\[
y_i = x_i + x_{i+v}x_{ i+2v} + x_{i+2v}, \qquad 1 \le i \le n,
\]
Then we will show that $\chi_{n, v}$ permutes $\mathbb{F}_2^n$ if and only if $2^k \mid v$.  Under the same condition, we introduce another class of permutations of  $\mathbb{F}_2^n$,   $\chi_{n, -2v}:\mathbb{F}_2^n \to \mathbb{F}_2^n$, defined by 

\[
y_i = x_i + x_{i-2v}x_{ i-v} + x_{i-v}, \qquad 1 \le i \le n. 
\]

The original  mapping $\chi_n$ is a special case of  $\chi_{n, v}$ when $v=1$.  It permutes  $\mathbb{F}_{2}^n$ if and only if $n$ is odd.  These two newly introduced  classes of  nonlinear mappings induce permutations of $\mathbb{F}_{2}^n$ where $n$ can be even.  Moreover, their  latency are  lower than  that of the recent generalization introduced by Lyu et al.  \cite{MR5047775}  because the algebraic degree of our nonlinear mappings $\chi_{n, v}$ and $\chi_{n, -2v}$  is as small as $2$, which equal to the algebraic degree of the original $\chi_n$. 

In order to derive these two classes of nonlinear permutations,  we study all these shift-invariant mappings of algebraic degree $2$ of the same shape.  Namely,  let  
$\chi_{n, u, v, w}: \mathbb{F}_2^n \to \mathbb{F}_2^n$ defined by $y=\chi_{n,  u, v, w} (x)$ with
\begin{equation}\label{mapping}
y_i = x_{i+u} +  x_{i+v} (x_{i+w} + a_i),  \qquad 1 \le i \le n,
\end{equation}
We completely characterize when  $P(x) = \chi_{n, u, v, w}(x) $ permutes $\mathbb{F}_{2}^n$. Namely,  $\chi_{n, u, v, w}$ permutes $\mathbb{F}_{2}^n$ if and only if  $2^k \mid  (v-w)$, 
$u-w = 2 (v-w)~ \text{mod}~ n$ or $u-w = - (v-w)~ \text{mod}~ n$, and $a_0=\cdots = a_{n-1} =1$.  Essentially all such permutations are shift equivalent to either   $\chi_{n, v}$ or $\chi_{n, -2v}$.

\section{Low-latency Permutations}

Let  $\ggg$ be the left shift operator.  We rewrite the mapping  (\ref{mapping}) as 
	\[
		y = (x\ggg u) + (x\ggg v) \cdot ((x\ggg w)+ a),
	\]
	where $x, y, a\in \mathbb{F}_2^n$, and $u, v, w$ are integers. Without  loss of generality, we can assume that $u, v \geq1$ and $w=0$ since $y=P(x)$ is a permutation if and only if $y=P(x\lll w)$ is a permutation, that is,  $y=P(x)$ is defined by
	\begin{equation} \label{eq:pe}
		y = P(x) := (x\ggg u) + (x\ggg v) \cdot (x + a).
	\end{equation}
	
  Let $0\ne \delta\in \mathbb{F}_2^n$. Then $y=P(x)$ is a permutation if and only if for any nonzero $\delta$, the equation on $x$
	\[
		((x+ \delta)\ggg u) + ((x+ \delta)\ggg v) \cdot (x+ \delta+ a) = (x\ggg u) + (x\ggg v) \cdot (x+ a)
	\]
	has no solution in $\mathbb{F}_2^n$.  The equation can be simplified to 
	\begin{equation}\label{eq:pcond}
		(\delta \ggg v)\cdot x+ \delta\cdot(x\ggg v)=(\delta\ggg v)\cdot \delta+(\delta\ggg v)\cdot a+(\delta\ggg u).
	\end{equation}
	We rewrite it with its bit form using the following system of linear equations: 
	\begin{equation}\label{eq:pbcond}
		\delta_{i+v} x_i+\delta_i x_{i+v} = \delta_{i+v}\delta_i+\delta_{i+v} a_i+\delta_{i+u}, \ i=0,1,\cdots,n-1,
	\end{equation}
	where all operations on the subscripts are done modulo $n$.
	
	Let $d=\text{gcd}(v, n),  \tau=\text{gcd}(d, u)$ and $n=dl$. We define
	\[
	C_i = \left\{(i+t v)\text{mod } n \left|   t=0,1,\cdots, l-1\right.\right\}, i=0,1,\cdots, d-1
	\]
	and
	\[
	D_j = \left\{(j+tu)\text{mod } d \left|   t=0,1,\cdots, \frac{d}{\tau}-1\right.\right\}, j=0,1,\cdots, \tau-1.
	\]
	Denote
	\[
	T_j = \bigcup_{i\in D_j} C_i
	\]
	for $0\le j<\tau$.	Then we  can divide Eqn. (\ref{eq:pbcond}) into $\tau$ independent subsystems $E_0,\cdots, E_{\tau-1}$,  and for each subsystem $E_j$, the subscripts of its all parameters $x, \delta, a$ are in $T_j$:
	\begin{equation} \label{eqn5}
		E_j:\quad 		\delta_{i+v} x_i+\delta_i x_{i+v} = \delta_{i+v}\delta_i+\delta_{i+v} a_i+\delta_{i+u}, \ i\in T_j.
	\end{equation}
	
	 Obviously, that Eqn. (\ref{eq:pbcond}) has no solution for all nonzero $\delta$ is equivalent to that every subsystem $E_j$ defined in (\ref{eqn5})  has no solution for all possible nonzero $\delta_{T_j}$, where $\delta_{T_j}$ means all bits of $\delta$ whose coordinates are in $T_j$.  Indeed,  if a subsystem $E_j$  has a solution for some nonzero $\delta_{T_j}$, then we can construct a nonzero $\delta'$ such that Eqn. (\ref{eq:pbcond}) also has a solution, where $\delta'_i=\delta_i$ for $i\in T_j$ and $\delta'_i=0$ for $i\not\in T_j$. Therefore we focus on  one subsystem  $E_j$ for some $j$. Without loss of generality, below we will only consider $E_0$.
	
	In order to  exclude some values of $a$, we usually choose a certain $\delta$  such that $\delta_{T_0}$ are not zero and $E_0$ has solutions.  	 For simplification, for a given nonempty ordered set $A=\{i_0, i_1, \cdots, i_{|A|-1}\} \subseteq \mathbb{Z}_n$, we denote $\delta_A=(\delta_{i_0}, \delta_{i_1},\cdots, \delta_{i_{|A|-1}})\in \mathbb{F}_2^{|A|}$.	
	
	For any fixed index $i \in D_0$,  we denote $C_i=(c_{i,0}, c_{i,1}, \cdots, c_{i,l-1})$, where $c_{i,t}=i+tv\text{ mod } n$, $t=0,1, \cdots, l-1$. Further, we simply write $x_{c_{i,t}}, \delta_{c_{i,t}}, a_{c_{i,t}}$ as $x_{i,t}, \delta_{i,t}, a_{i,t}$  respectively.  We can further focus on  the subsystem 
	\begin{equation} \label{eq:ci}
		\delta_{i,t+1} x_{i,t}+\delta_{i,t} x_{i,t+1} = \delta_{i,t}\delta_{i,t+1}+\delta_{i,t+1} a_{i,t}+\delta_{j, s+t},  
	\end{equation}
	where $t=0,1\cdots,l-1$, $j=i+u\text{ mod } d$ and $s= (\frac{i+u-j}{d})(\frac{v}{d})^{-1}\mod l$. 
	
Below we consider the following cases  so that  Eqn. (\ref{eq:ci})  has a solution.

	\begin{enumerate}
		\item[(a)]  If $|C_i|=2$, then it is easy to check that Eqn. (\ref{eq:ci}) always has two solutions for any nonzero $\delta_{C_i}$.
		\item[(b)]  If  there is no zero entry in $\delta_{C_i}$, i.e., $\delta_{C_i}=(11\cdots1)$, then Eqn. (\ref{eq:ci}) is written simply as
		\[
			x_{i,t}+x_{i,t+1}=1+a_{i,t}+\delta_{j,s+t}, t=0,1\cdots,l-1,
		\]
		and it has at least one solutions if and only if
		\begin{equation}\label{eq:cond2}
			\sum_{t=0}^{l-1} a_{i,t} \equiv \sum_{t=0}^{l-1} \delta_{j,t} + |C_i| \mod 2.
		\end{equation}
		In particular,  if  $u\in C_0$, that is, $j=i  \text{ mod } d$,  then Eqn. (\ref{eq:cond2}) is equivalent to
		\begin{equation}\label{eq:cond2_}
			\sum_{t=0}^{l-1}a_{i,t} \equiv 0 \mod 2.
		\end{equation}
		\item[(c)]  Assume $\delta_{C_i}=(1\cdots101\cdots1)$, i.e.,  $\delta_{i,k}=0$ and $\delta_{i,t}=1$ for $t\ne k$. Then  we have
		\[
		\left\{\begin{array}{l}
			x_{i,k} = \delta_{j,s+k-1} = a_{i,k}+\delta_{j,s+k},\   t=k-1, k, \\
			x_{i,t}+x_{i,t+1}=1+a_{i,t}+\delta_{j,s+t},\  t\ne k-1, k.
		\end{array}	\right.
		\]
		It is easy to see that  Eqn. (\ref{eq:ci})  has solutions if and only if
		\begin{equation} \label{eq:cond3}
			a_{i,k} = \delta_{j,s+k-1}+\delta_{j,s+k}.
		\end{equation}
		\item[(d)]  Assume $\delta_{C_i}=(1\cdots10\cdots01\cdots1)$, i.e., $\delta_{i,k}=\delta_{i,k+1}=\cdots=\delta_{i,k+r-1}=0$ and $\delta_{i,k}=1$ for all the other $k$'s, where $r\ge2$. Similarly to Item 3, 				
	 Eqn. (\ref{eq:ci}) has solutions if and only if 
		\begin{equation} \label{eq:cond4}
			\delta_{j,s+k}=\cdots=\delta_{j,s+k+r-2}=0.
		\end{equation}
	\end{enumerate}

Before deriving the main theorem, we first present several  useful lemmas.

\begin{lemma} \label{lem: setI1}
Let $n, v,u$ be defined as above. If $\gcd(v,n)\nmid u$, then there exists a nonempty set $I\subseteq\mathbb Z_n$ such that
\begin{equation}\label{eq:pcond_setI}
	I\cap(I+v)=\varnothing,
	\qquad
	I+u\subseteq I\cup(I+v).
\end{equation}	
\end{lemma}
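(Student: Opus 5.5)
The idea is to build $I$ from whole translates of one subset of a single $v$-coset, moved from coset to coset by the shift $u$. The hypothesis $d=\gcd(v,n)\nmid u$ guarantees that $u$ actually moves between different cosets.

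\emph{Coordinates and reduction.} Write $d=\gcd(v,n)$, $n=dl$, $\tau=\gcd(d,u)$ and $m=d/\tau$. Since $1\le v<n$ we have $l\ge 2$, and since $d\nmid u$ we have $m\ge 2$. The classes $C_r$ ($0\le r<d$) are cycles of length $l$ under $x\mapsto x+v$. The classes $C_0, C_u, C_{2u},\dots,C_{(m-1)u}$ (indices mod $d$) are pairwise distinct, and $mu\equiv 0 \pmod d$. Hence $mu\equiv cv\pmod n$ for some integer $c$.

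I will look for $I$ of the form
\[
I=\bigcup_{k=0}^{m-1} S_k,\qquad S_k=\{\,ku-e_kv+tv : t\in A\,\}\subseteq C_{ku \bmod d},
\]
where $A\subseteq\mathbb Z_l$ is nonempty and $0=e_0\le e_1\le\dots\le e_{m-1}\le e_{m-1}+1$ are small integers. Each step $k\to k+1$ uses either $u$ or $u-v$.

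\emph{The two conditions.}
\begin{itemize}
\item Disjointness. Distinct $S_k$ lie in distinct cosets. So $I\cap(I+v)=\varnothing$ holds exactly when $A\cap(A+1)=\varnothing$ in $\mathbb Z_l$.
\item Inner steps. For $k<m-1$, if $e_{k+1}=e_k$ then $S_k+u=S_{k+1}\subseteq I$. If $e_{k+1}=e_k+1$ then $S_k+u=S_{k+1}+v\subseteq I+v$. Either way the containment holds.
\item Wrap-around step. Only $S_{m-1}+u\subseteq S_0\cup(S_0+v)$ remains. Translated into $\mathbb Z_l$, with $e:=e_{m-1}$, this reads
\[
A+(c-e)\subseteq A\cup(A+1).
\]
Because $m\ge2$, we may choose $e=0$ (all steps equal to $u$) or $e=1$ (a single step equal to $u-v$). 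So the net shift $c'=c-e$ can be either $c$ or $c-1$.
\end{itemize}

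\emph{Choosing $A$.}
\begin{itemize}
\item If $l$ is even, take $A$ to be the even residues. Then $A+c'$ equals either $A$ or $A+1$, so the condition holds with $e=0$.
\item If $l$ is odd (hence $l\ge3$), take $A=\{0,2,\dots,l-3\}$. It has no two cyclically consecutive elements, since the gap from $l-3$ to $0$ is $3$. Moreover $A\cup(A+1)=\mathbb Z_l\setminus\{l-1\}$, so the condition becomes $l-1-c'\notin A$.
  \begin{itemize}
  \item If $l-1-c\notin A$, take $e=0$.
  \item Otherwise $l-1-c$ is an even number in $[0,l-3]$. Then with $e=1$ the excluded point is $l-c$, which is an odd residue in $[1,l-2]$ and therefore not in $A$.
  \end{itemize}
\end{itemize}
In all cases $I\neq\varnothing$, which gives the statement.

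The main obstacle is the wrap-around closure, i.e.\ controlling the total shift $c$ picked up after going once around the cosets. The freedom to insert a single step $u-v$ in place of $u$ when $m\ge2$ is exactly what removes that obstruction. I expect this parity/position bookkeeping in $\mathbb Z_l$, especially the odd-$l$ case, to need the most care. The rest is a direct verification from the coset description of $C_i$ and $D_j$ given above.
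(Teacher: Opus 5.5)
Your proof is correct, and it takes a genuinely different route from the paper.

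\textbf{The paper's route.} The paper argues extremally. It takes a shortest closed walk in $\mathbb{Z}_n$ with steps $\mathsf A=u$ and $\mathsf B=u-v$, chosen among these to maximize the number of changes of step type, and lets $I$ be its vertex set. The inclusion $I+u\subseteq I\cup(I+v)$ is then automatic. Disjointness is proved by contradiction: a pair $j,j+v\in I$ forces the cyclic pattern $\mathsf B^{t_1}\mathsf A^{t_2}$. The hypothesis $\gcd(v,n)\nmid u$ is used only to get $t_1,t_2\ge 2$, so that this pattern can be rearranged into one with more type changes.

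\textbf{Your route.} You build $I$ explicitly. You take one pattern $A\subseteq\mathbb{Z}_l$ with $A\cap(A+1)=\varnothing$ and carry translates of it along the $u$-orbit $C_0,C_u,\dots,C_{(m-1)u}$ of $v$-cosets. Both required conditions then collapse to the single wrap-around condition $A+(c-e)\subseteq A\cup(A+1)$ in $\mathbb{Z}_l$. Your parity bookkeeping settles this, and both the even and odd cases check out, including $l=3$.

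\textbf{What each buys.} The paper's argument is coordinate-free and short to state, but it is nonconstructive and relies on a delicate double extremal choice. Yours gives a concrete $I$ of known size $m|A|$, so the character sum in Lemma~\ref{lemma1} has the explicit value $\pm 2^{n-m|A|}$. It also shows exactly where the hypothesis is needed: only when $l$ is odd, to allow inserting one step $u-v$ (that is, $e=1$). For even $l$ your construction also works when $m=1$. Combined with the character-sum argument of Lemma~\ref{lemma1}, this shows directly that $P$ is never a permutation when $2^k\nmid v$, for every $u$ and every $a$. That is stronger than Lemma~\ref{lemma3}, which assumes $u\in C_0$ and $a=\mathbf{1}$.

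\textbf{A cosmetic point.} The chain $0=e_0\le e_1\le\dots\le e_{m-1}\le e_{m-1}+1$ should read $e_0=0$ and $e_{k+1}\in\{e_k,e_k+1\}$. That is what your inner-step verification actually uses.
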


\begin{proof}
We consider directed walks on $\mathbb Z_n$ with the following two allowed step types:
\[
\mathsf A=u,\qquad \mathsf B=u-v.
\]
Since repeated steps of type $\mathsf A$ eventually return to the starting vertex,  such closed walks exist.   Choose a nonempty closed walk of minimum length with both types of steps are allowed,  By minimality, its vertices are distinct except for the coinciding initial
and terminal vertices.   Among all such walks, choose one walk $\mathcal{W}$ having the maximum number of changes between the two step types,  where the last and first steps are also regarded as consecutive.
 Let $I$ be its vertex set.

For every $i\in I$, its successor is either $i+u$ or $i+u-v$. Thus
\[
I+u\subseteq I\cup(I+v).
\]

Suppose there exists $j\in I$ that both $j$ and $j+v$ belong to $I$. The segment from $j$ to $j+v$  in the walk $\mathcal{W}$  has displacement $v$. If this segment contains a step of type $\mathsf A$, replacing that step by $\mathsf B$ would decrease its displacement by $v$. This would produce a shorter nonempty closed walk containing $j$, contradicting minimality of $\mathcal{W}$. Hence this segment consists entirely of steps of type $\mathsf B$.

Similarly, the complementary segment from $j+v$ to $j$ in  the walk $\mathcal{W}$   consists entirely of steps of type $\mathsf A$. Therefore, without loss of generality, we can assume that the cyclic step sequence  $\mathcal{W}$  has the form
\[
\mathsf B^{t_1}\mathsf A^{t_2}
\]
for some positive integers $t_1,t_2$. Here $ t_1 (u-v) \equiv v \pmod{n}$ and  $t_2 u \equiv -v \pmod n$.  Because   $\gcd(v,n)\nmid u$, we obtain $t_1, t_2\ge2$.  Hence we can rearrange the steps of the walk $\mathcal{W}$  into the order
\[
\mathsf B\,\mathsf A\,
\mathsf B^{t_1-1}\mathsf A^{t_2-1}.
\]
The resulting closed walk  $\mathcal{W^\prime}$ still contains $j$ and has the same minimum length. However, the number of changes of step type in the walk $\mathcal{W}^\prime$  is four,
whereas that of $\mathcal{W}$  is two. This contradicts our choice of the walk $\mathcal{W}$. Thus $I\cap(I+v)=\varnothing$.
\end{proof}

\begin{lemma} \label{lem:usigma}
	For any given $n\ge5$ and $3\le s\le n-2$, let $\sigma_0=0$ and $\sigma_{i+1}=\sigma_i+s-1 \text{ mod } n$ or $\sigma_i+s \text{ mod } n$ for $i\ge0$. Then there exists a cycle
	\[
	0=\sigma_0\rightarrow \sigma_1 \rightarrow \sigma_2 \rightarrow \cdots \rightarrow \sigma_{r-1} \rightarrow \sigma_{r}=0
	\] 
	such that $\sigma_i\ne \sigma_j\pm1 \mod n$ for any $0\le i< j< r$.
\end{lemma}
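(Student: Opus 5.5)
The plan is to reuse the extremal-walk argument from the proof of Lemma~\ref{lem: setI1}, with the step types $\mathsf A=s$ and $\mathsf B=s-1$ (taken mod $n$). A sequence $\sigma_0,\sigma_1,\dots$ as in the statement is a directed walk from $0$ using these two steps. Repeating $\mathsf A$ eventually returns to $0$, so nonempty closed walks exist.

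First, choose a nonempty closed walk of minimum length $r$, with any mixture of step types. By shift-invariance we may translate it to pass through $0$ and start it there. Its vertices $\sigma_0,\dots,\sigma_{r-1}$ are pairwise distinct: a repeated vertex would split off a strictly shorter nonempty closed subwalk, which could be translated to start at $0$, contradicting minimality. Among all closed walks of this minimum length, pick one, $\mathcal W$, with the maximum number of cyclic changes of step type, counting the last and first steps as consecutive.

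Next, suppose $\mathcal W$ has two vertices with $\sigma_j=\sigma_i+1 \pmod n$; by symmetry of the roles this covers both signs $\pm1$. Consider the segment of $\mathcal W$ from $\sigma_i$ to $\sigma_j$, which has displacement $1$.
\begin{itemize}
\item If this segment contained an $\mathsf A$ step, replacing it by $\mathsf B$ would give a walk from $\sigma_i$ to itself. This is a nonempty closed walk shorter than $r$ (it has length at most that of the segment, which is less than $r$), contradicting minimality. So the segment consists only of $\mathsf B$ steps.
\item Symmetrically, the complementary segment from $\sigma_j$ back to $\sigma_i$ has displacement $-1$. Replacing a $\mathsf B$ step there by $\mathsf A$ would again close up a shorter walk, so that segment consists only of $\mathsf A$ steps.
\end{itemize}
Hence, cyclically, $\mathcal W=\mathsf B^{t_1}\mathsf A^{t_2}$ with $t_1(s-1)\equiv 1$ and $t_2 s\equiv -1 \pmod n$. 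The hypothesis $3\le s\le n-2$ gives $2\le s-1\le n-3$, so $s-1\not\equiv 1$ and $s\not\equiv -1 \pmod n$. Therefore $t_1,t_2\ge 2$.

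Finally, rearrange the steps into the cyclic order $\mathsf B\,\mathsf A\,\mathsf B^{t_1-1}\mathsf A^{t_2-1}$. This is still a closed walk of the same minimal length $r$, since the total displacement is unchanged. It has four type changes versus two for $\mathcal W$, contradicting the choice of $\mathcal W$. Thus $\mathcal W$, started at $0$, is the required cycle with $\sigma_i\ne\sigma_j\pm1$ for all $0\le i<j<r$.

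The main point needing care is that the minimality arguments genuinely produce shorter nonempty closed walks after the step replacement. In particular, the modified segment must be a proper subwalk: it is nonempty because $\sigma_i\ne\sigma_j$, and it is shorter than $r$. I expect this step, together with ruling out $t_1=1$ or $t_2=1$, to be where the hypotheses $n\ge5$ and $3\le s\le n-2$ are actually used.
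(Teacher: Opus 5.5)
Your proposal is correct and follows essentially the same route as the paper: take a shortest closed walk, show that an adjacent pair $\sigma_j=\sigma_i\pm1$ forces the cyclic step pattern $\mathsf B^{t_1}\mathsf A^{t_2}$, then reorder the steps to reach a contradiction. The differences are minor. You break ties by maximizing the number of type changes, as in the proof of Lemma~\ref{lem: setI1}, where the paper instead assumes all cycles are invalid. You also verify explicitly that $t_1,t_2\ge 2$ follows from $3\le s\le n-2$, a point the paper uses only implicitly.
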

\begin{proof}
	For a given cycle $0=\sigma_0\rightarrow \sigma_1 \rightarrow \sigma_2 \rightarrow \cdots \rightarrow \sigma_{r-1} \rightarrow \sigma_{r}=0$,  we call it invalid if there exists two elements $\sigma_i$ and $\sigma_j$ in it such that $\sigma_i=\sigma_j\pm 1\mod n$, otherwise, we call it  {\it valid}. Here we assume that all these cylces are  {\it invalid}. 
	For simplicity, we make convention that all operations on the subscripts are done modulo $r$, for example, $\sigma_{r+1}=\sigma_{r+1\mod r}=\sigma_1$, and operations on integers are done modulo $n$, for example,  $\sigma_{i+1}=\sigma_i+s$ means $\sigma_{i+1}=\sigma_i+s\mod n$.
	
	Let the cycle
	\[
	0=\sigma_0\rightarrow \sigma_1 \rightarrow \cdots \rightarrow  \sigma_{i} \rightarrow\cdots \rightarrow \sigma_{j-1} \rightarrow \sigma_j \rightarrow \sigma_{j+1} \rightarrow\cdots\rightarrow \sigma_{r}=0
	\]
	be a shortest invalid one. Without loss of generality, we assume that $\sigma_j=1$ for some $j$.  Indeed,  if  $\sigma_j=\sigma_i+1 $ for some $i>0$,  then we take $\sigma'_{k}=\sigma_{i+k}-\sigma_i$ and thus construct a new shortest cycle $\sigma'_{k}$  such that $\sigma'_{j-i} =1$. On the other hand, if $\sigma_i=\sigma_j+1$, then we take $\sigma'_{k}=\sigma_{j+k}-x_j$ and construct another sequence $\sigma'_k$ satisfying the desired condition.  
	
	Now we can  prove that $\sigma_{k+1}=\sigma_k+s$ for any $j\le k<s$. If $\sigma_{j+1}=\sigma_j+s-1=s $, then  $\sigma_{j+1}=\sigma_0+s$ and we can construct a shorter cycle:
	\[
	0=\sigma_0\xrightarrow{s}  \sigma_{j+1} \rightarrow\cdots\rightarrow \sigma_{r}=0
	\]
	which leads to a contradiction. Thus $\sigma_{j+1}=\sigma_j+s=s+1$. Similarly, we can prove that $\sigma_{j+2}=\sigma_{j+1}+s=2s+1$; indeed,  if $\sigma_{j+2}=\sigma_{j+1}+s-1$, then $\sigma'_{j+1}=s$ and $\sigma_{j+2}=\sigma'_{j+1}+s=2s$. So we can construct a shorter cycle  $0=\sigma_0\xrightarrow{s}  \sigma'_{j+1} \rightarrow \sigma_{j+2} \cdots\rightarrow \sigma_{r}=0$.  Therefore we can conclude $\sigma_{k+1}=\sigma_k+s$ for all $j\le k< r$.
	
	Similarly,  we can prove that $\sigma_{k+1}=\sigma_k+s-1$ for any $0\le k<j$. If $\sigma_j=\sigma_{j-1}+s$, then $\sigma'_j = \sigma_{j-1}+s-1=0$ and we construct a shorter cycle $0=\sigma'_j\xrightarrow{s}  \sigma_{j} \rightarrow \sigma_{j+1} \cdots\rightarrow \sigma_{r}=0$.  Thus $\sigma_j=\sigma_{j-1}+s-1$. Similarly,  $\sigma_{j-1}=\sigma_{j-2}+s-1$ and thus  $\sigma_{k+1}=\sigma_k+s-1$ for all $0\le k<j$. 
	
	Combining the above discussions, we concluded that  all shortest invalid cycles must have the form
	\begin{equation}\label{eq:invalidf}
		0\xrightarrow{s-1} \sigma_1  \xrightarrow{s-1} \cdots \xrightarrow{s-1} 1 \xrightarrow{s} \sigma_{j+1} \xrightarrow{s} \cdots \xrightarrow{s} 0.		
	\end{equation}
	However, if  we make a minor adjustment on the above cycle, then we obtain a new cycle 
	\[
	0\xrightarrow{s-1} \sigma_1  \xrightarrow{s-1} \cdots  \xrightarrow{s-1} \sigma_{j-1} \xrightarrow{s} 2 \xrightarrow{s-1} \sigma_{j+1} \xrightarrow{s} \cdots \xrightarrow{s} 0,
	\]
	which is also a shortest circle but does not have the form (\ref{eq:invalidf}).  Hence this new cycle must be a valid circle, contradicts to the assumption.  Therefore the conclusion holds. 
\end{proof}

       Now we  derive some necessary conditions on $u, v,  a,$ so that  $y=P(x)$ is a permutation of $\mathbb{F}_2^n$. That is,    Eqn. (\ref{eq:pcond}) or Eqn. (\ref{eq:pbcond}) does not have any solution for any nonzero $\delta$.
	
	\begin{lemma}  \label{lemma1}
		If  $P(x) = (x\ggg u) + (x\ggg v) \cdot (x + a)$   is a permutation of $\mathbb{F}_2^n$,  
		then $u\in C_0$.  
	\end{lemma}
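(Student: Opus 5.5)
We prove the contrapositive: assuming $d=\gcd(v,n)\nmid u$, i.e.\ $u\notin C_0$, we exhibit a nonzero $\delta$ for which the linear system (\ref{eq:pbcond}) in $x$ is solvable, so that $P$ is not injective. The plan is to take $\delta$ to be the indicator vector of a set built from Lemma~\ref{lem: setI1}.

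\textbf{Choice of $\delta$.} Lemma~\ref{lem: setI1} gives a nonempty $I$ with $I\cap(I+v)=\varnothing$ and $I+u\subseteq I\cup(I+v)$. Put $J=-I$ and let $\delta_i=1$ iff $i\in J$. Negating the two properties of $I$ gives
\[
J\cap(J-v)=\varnothing,\qquad J-u\subseteq J\cup(J-v).
\]
The first property says $\delta_i\delta_{i+v}=0$ for every $i$, so along each orbit $C_i$ the word $\delta_{C_i}$ has no two consecutive ones. Hence (\ref{eq:pbcond}) reduces to
\[
\delta_{i+v}x_i+\delta_i x_{i+v}=\delta_{i+v}a_i+\delta_{i+u}.
\]
We examine each index $i$ in turn.
\begin{itemize}
\item If $\delta_i=\delta_{i+v}=0$, the equation is solvable only if $\delta_{i+u}=0$. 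This is exactly what the second property guarantees: $i+u\in J$ forces $i\in J$ or $i+v\in J$.
\item If $\delta_{i+v}=1$ and $\delta_i=0$, the equation prescribes $x_i=a_i+\delta_{i+u}$.
\item If $\delta_i=1$ and $\delta_{i+v}=0$, the equation prescribes $x_{i+v}=\delta_{i+u}$.
\end{itemize}
Every prescribed variable $x_m$ has $\delta_m=0$. Variables not prescribed at all are free.

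\textbf{Consistency.} A variable $x_m$ receives two prescriptions precisely when $m\notin J$ but both $m-v$ and $m+v$ lie in $J$. This is an isolated zero in $\delta_{C_i}$, as in case (c). Solvability then requires
\[
a_m=\delta_{m+u}+\delta_{m-v+u},
\]
which is condition (\ref{eq:cond3}). Runs of at least two zeros cause no conflict, matching case (d), and the required vanishing there is supplied by the inclusion $J-u\subseteq J\cup(J-v)$.

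\textbf{Main obstacle.} The difficulty is to arrange that no isolated zeros occur, or that they impose no condition on $a$. Condition (\ref{eq:cond3}) depends on the arbitrary vector $a$, whereas the statement is claimed for every $a$. Both remedies below would be carried out by redoing the minimal-walk argument of Lemma~\ref{lem: setI1} with one additional requirement.
\begin{itemize}
\item First attempt: sharpen the lemma so that $I$ also satisfies $I\cap(I+2v)=\varnothing$. Then $J\cap(J-2v)=\varnothing$, so no $m$ has both $m\pm v\in J$, and the system is always solvable. This would come from taking the minimal closed walk in the lemma with steps $u$ and $u-v$, and showing that two of its vertices at distance $2v$ would allow a shortcut. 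In that shortcut, two $\mathsf A$-steps become two $\mathsf B$-steps, or the walk is rearranged as in the proof of the lemma.
\item Fallback: for every isolated zero $m$, show that $m+u$ and $m-v+u$ both lie in $J$ or both lie outside $J$ in a suitable choice of walk. One would then handle $a_m=1$ by passing to a second admissible set, for instance $J$ replaced by $J\cup\{m\}$ with the orbit re-examined, so that the single condition is met.
\end{itemize}

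Once consistency holds, $x$ exists and $\delta\neq 0$ because $I\neq\varnothing$. So $P(x)=P(x+\delta)$, and $P$ is not a permutation, which proves the contrapositive.
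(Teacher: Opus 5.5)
Your reduction of (\ref{eq:pbcond}) for $\delta$ equal to the indicator vector of $J$ is correct. However, the proof stops exactly where the real difficulty lies. Solvability requires $a_m=\delta_{m+u}+\delta_{m-v+u}$ at every isolated zero $m$, which is a condition on the arbitrary vector $a$, and neither remedy in your ``Main obstacle'' paragraph is carried out.

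The first remedy is impossible in general. If $v=n/2$ then $I+2v=I$, so no nonempty $I$ satisfies $I\cap(I+2v)=\varnothing$. Concretely, take $n=4$, $v=2$, $u=1$, so $d=2\nmid u$.
\begin{itemize}
\item The sets $J$ with $J\cap(J-v)=\varnothing$ and $J-u\subseteq J\cup(J-v)$ are exactly those consisting of one even and one odd residue.
\item For each such $J$, both elements $m\notin J$ are isolated zeros.
\item $\{m+u,m-v+u\}$ is the other orbit, which meets $J$ exactly once, so the condition reads $a_m=1$.
\end{itemize}
Hence for $a=\mathbf{0}$, or for $a_0=a_2=1$, $a_1=a_3=0$, no $\delta$ whose support has no two $v$-consecutive elements produces a collision. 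Your fallback of adjoining $m$ to $J$ destroys $J\cap(J-v)=\varnothing$, on which your entire case analysis rests. So exhibiting an explicit collision built from $I$ does not prove the lemma for all $a$.

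The missing idea is to avoid collisions altogether and use balancedness instead. This is what the paper does, with the same set $I$.
\begin{itemize}
\item Sum the coordinates: $g=\sum_{i\in I}P_i=\sum_{i\in I}(x_ix_{i+v}+a_ix_{i+v}+x_{i+u})$.
\item Because $I\cap(I+v)=\varnothing$, the pairs $\{x_i,x_{i+v}\}$ are disjoint.
\item Because $I+u\subseteq I\cup(I+v)$, each linear term $x_{i+u}$ is one of the variables in these pairs.
\item Hence $g=\sum_{i\in I}(x_ix_{i+v}+\alpha_ix_i+\beta_ix_{i+v})$, with all dependence on $a$ absorbed into $\alpha_i,\beta_i$.
\end{itemize}
Then $\sum_x(-1)^{g(x)}=\pm 2^{n-|I|}\neq 0$ for every $a$. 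By contrast, every nonzero component $\sum_{i\in I}y_i$ of a permutation is balanced. This is why the two properties in Lemma~\ref{lem: setI1} suffice, and no extra condition such as $I\cap(I+2v)=\varnothing$ is needed.
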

	\begin{proof}
		We assume that $u\not\in C_0$, that is,  $\gcd(v,n)\nmid u$.  By Lemma \ref{lem: setI1}, there exists a nonempty set $I\subset \mathbb Z_n$ such that $I\cap (I+v) = \varnothing$ and $(I+u) \subset I\cup (I+v)$. 	Consider the Boolean component
		\[
		g(x)=\sum_{i\in I}P_i(x)
		=\sum_{i\in I}
		\bigl(x_ix_{i+v}+a_ix_{i+v}+x_{i+u}\bigr).
		\]
		By $I\cap(I+v)=\varnothing$, the variable pairs
		\[
		\{x_i,x_{i+v}\},\qquad i\in I,
		\]
		are pairwise disjoint. Moreover, $I+u\subseteq I\cup(I+v)$ ensures that every term $x_{i+u}$ in $g$ can be written as one of $x_i$ or $x_{i+v}$. Hence there exist $\alpha_i,\beta_i\in\mathbb{F}_2$ such that
		\[
		g(x)=\sum_{i\in I}
		\bigl(x_ix_{i+v}+\alpha_ix_i+\beta_ix_{i+v}\bigr).
		\]
		
      Since
		\[
		\sum_{s,t\in\mathbb{F}_2}
		(-1)^{st+\alpha s+\beta t}
		=2(-1)^{\alpha\beta},
		\]
		the disjointness of the variable pairs yields
		\[
		\begin{aligned}
			\sum_{x\in\mathbb{F}_2^n}(-1)^{g(x)}
			&=
			2^{n-2|I|}
			\prod_{i\in I}
			\left(
			\sum_{s,t\in\mathbb{F}_2}
			(-1)^{st+\alpha_i s+\beta_i t}
			\right)\\
			&=
			2^{n-|I|}(-1)^{\sum_{i\in I}\alpha_i\beta_i}
			\ne0.
		\end{aligned}
		\]
		
		On the other hand, since $P$ is a permutation and $I\ne\varnothing$, changing variables $y=P(x)$ gives
		\[
		\sum_{x\in\mathbb{F}_2^n}(-1)^{g(x)}
		= \sum_{y\in\mathbb{F}_2^n} (-1)^{\sum_{i\in I}y_i}
		=0.
		\]
		This is a contradiction. Therefore $u\in C_0$.
	\end{proof}

	\begin{lemma} \label{lemma2}
		Let $u\in C_0$.  If  $P(x) = (x\ggg u) + (x\ggg v) \cdot (x + a)$   is a permutation of $\mathbb{F}_2^n$,  
		 then $a=\mathbf{1}$, the all one vector. 
	\end{lemma}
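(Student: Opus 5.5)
The plan is to argue by contradiction: assume $a_m=0$ for some index $m$ and exhibit a nonzero $\delta$ for which Eqn. (\ref{eq:pbcond}) has a solution $x$. Since $u\in C_0$, i.e. $d\mid u$, we have $\tau=d$, so $D_j=\{j\}$ and $T_j=C_j$. The system therefore splits into the independent cyclic subsystems (\ref{eq:ci}), one for each class $C_i$. Moreover, $u\equiv sv \pmod n$ for a single $s\in\{0,\dots,l-1\}$, and then $c_{i,t}+u=c_{i,s+t}$ for every $i$. In (\ref{eq:ci}) this means $j=i$ and the right-hand side involves $\delta_{i,s+t}$ with the same shift $s$ for every class. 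I will take $i$ to be the class containing $m$, write $m=c_{i,k}$, and let $\delta$ vanish outside $C_i$. The other subsystems then read $0=0$, and only (\ref{eq:ci}) on $C_i$ remains.

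Next I dispose of the degenerate cases. Since $1\le v<n$, we have $l\ge 2$. If $l=2$, item (a) says (\ref{eq:ci}) is solvable for every nonzero $\delta_{C_i}$, so $P$ is never a permutation and there is nothing to prove. Assume $l\ge 3$. If $s=0$ or $s=1$, I use item (d) with $r=2$, i.e. a block of two zeros at positions $k,k+1$ and ones elsewhere; this pattern has at least one $1$ because $l\ge3$. Condition (\ref{eq:cond4}) then asks only that $\delta_{i,s+k}=0$. For $s\in\{0,1\}$ the index $s+k$ lies inside the zero block, so the condition holds automatically. Again $P$ is not a permutation for any $a$, so the hypothesis excludes these cases.

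The main case is $l\ge 3$ and $2\le s\le l-1$. Here I use item (c): set $\delta_{i,k}=0$ and $\delta_{i,t}=1$ for all $t\ne k$. Solvability reduces to (\ref{eq:cond3}), namely $a_{i,k}=\delta_{i,s+k-1}+\delta_{i,s+k}$. Because $s\notin\{0,1\}$, neither $s+k-1$ nor $s+k$ is congruent to $k$ modulo $l$, so both $\delta$-values are $1$ and the right side is $0=a_m$. Hence (\ref{eq:ci}) has a solution for this nonzero $\delta$, contradicting that $P$ is a permutation. Therefore $a_m=1$ for every $m$, i.e. $a=\mathbf 1$.

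The step I expect to need the most care is confirming that items (c) and (d) really give solvability with no hidden parity constraint. In (c), once $x_{i,k}$ is fixed, the remaining equations $x_{i,t}+x_{i,t+1}=\cdots$ for $t\ne k-1,k$ form a path rather than a cycle through the $l-1$ positions, so they can always be solved by propagation from $x_{i,k}$. The same reasoning handles the path segments in (d). I will also check the index bookkeeping $c_{i,t}+u=c_{i,s+t}$ modulo $n$, which is what makes the shift $s$ independent of $i$.
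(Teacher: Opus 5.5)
Your argument for $l\ge 3$ is correct, but the case $l=2$ rests on item (a), which is false as stated. When $l=2$ we have $n=2d$ and $v=d$. Then $u\in C_0$ with $1\le u<n$ forces $u=v$, i.e.\ $s=1$. The two equations of (\ref{eq:ci}) on $C_i$ have the same left-hand side $\delta_{i,1}x_{i,0}+\delta_{i,0}x_{i,1}$, so they are solvable if and only if $\delta_{i,1}(1+a_{i,0})=\delta_{i,0}(1+a_{i,1})$. For example, $n=2$, $u=v=1$, $a=\mathbf{0}$, $\delta=(1,0)$ gives an inconsistent system. Hence ``$P$ is never a permutation when $l=2$'' does not follow from item (a).

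The conclusion is nevertheless true, and the repair is short. Since $1\le u<n$, you always have $s\neq 0$. If $s=1$ (this includes every case with $l=2$), then $P(x)_i=x_{i+v}(1+x_i+a_i)$. So $P(x)=\mathbf{1}$ forces $x=\mathbf{1}$, while $P(\mathbf{1})=a$; thus $a\neq\mathbf{1}$ makes $P$ non-surjective. Alternatively, for $l=2$ take $\delta_{i,k}=1$ and $\delta_{i,k+1}=1+a_{i,k+1}$, which meets the consistency condition because $a_{i,k}=0$. If $s\ge 2$, then automatically $l\ge 3$, and your item (c) computation finishes the proof; item (d) is then not even needed.

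With that repair, your route is cleaner than the paper's. The paper first disposes of $a=\mathbf{0}$ via $\delta=\mathbf{1}$ and then splits into $d=1$ and $d>1$. For $u\neq v$ it applies (\ref{eq:cond3}) at a coordinate with $a_i=0$, and for $u=v$ at a coordinate with $a_i=1$. For $d>1$ it either takes $\delta_{C_i}=\mathbf{1}$ on a class with $a_{C_i}=\mathbf{0}$, or passes to a class on which $a$ is neither $\mathbf{0}$ nor $\mathbf{1}$.

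Your observation that $u\in C_0$ yields one common shift $s$ on all classes lets you place $\delta$ on the class of a single zero of $a$. This removes both the $a=\mathbf{0}$ case and the split on $d$. For $u\neq v$ the decisive step is the same computation (\ref{eq:cond3}) as in the paper, while for $u=v$ you get the stronger fact that $P$ is never a permutation. Note that the paper's proof never invokes item (a), since (\ref{eq:cond3}) remains valid when $l=2$; that is why it avoids the problem above.
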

	
	\begin{proof}
Recall that  if  $P(x) = (x\ggg u) + (x\ggg v) \cdot (x + a)$   is a permutation of $\mathbb{F}_2^n$ then
 Eqn. (\ref{eq:pcond})  or  Eqn. (\ref{eq:pbcond}) has no solution on $x$ in $\mathbb{F}_2^n$ for any nonzero $\delta$. 
		Assume that $a\ne\mathbf{1}$. Firstly, if $a=\mathbf{0}$, then we take $\delta=\mathbf{1}$ and thus  $\mathbf{0}$ and $\mathbf{1}$ are two solutions to Eqn. (\ref{eq:pcond}), a contradiction. So we have $a\ne\mathbf{0}$. In this case, we  consider the following two subcases  according to the value of  $d=\text{gcd}(v, n)$. 
		\begin{itemize}
			\item $d=1$.  If $u\ne v$, we take $i$ such that $a_{i}=0$ due to $a\ne\mathbf{1}$. Further we take $\delta_i=0$ and $\delta_t=1$ for  all $t\ne i$. Since
			\[
				0=a_{i}=\delta_{i-v+u}+\delta_{i+u}=0,
			\]
			thus Eqn. (\ref{eq:ci}) has solutions according to Eqn. (\ref{eq:cond3}).  If $u=v$, we take $i$ such that $a_{i}=1$ due to $a\ne\mathbf{0}$.  Further we take $\delta_i=0$ and $\delta_t=1$ for all $t\ne i$. Since
			\[
			1=a_{i}=\delta_{i}+\delta_{i+u}=1,
			\]
			Eqn. (\ref{eq:ci}) has solutions according to Eqn. (\ref{eq:cond3}).  Hence Eqn (\ref{eq:pbcond}) has a solution,  a contradiction. 
			
			\item $d>1$.  Note that $a\ne \mathbf{1}$, if there exists an $i$ such that $a_{C_i}=\mathbf{0}$, then we take $\delta_{C_i}= \mathbf{1}$ and get that Eqn. (\ref{eq:ci})  has solutions by Eqn. (\ref{eq:cond2_}). Otherwise, there must exist one group $C_i$ such that $a_{C_i}$ is neither  the zero vector nor the all one vector.  Similar to the case $d=1$,  Eqn. (\ref{eq:ci})  and thus Eqn. (\ref{eq:pbcond})    always has solutions for some nonzero $\delta$, a contradiction. 
		\end{itemize}
		Combining the above two cases, we complete the proof. 
	\end{proof}
		
		\begin{lemma} \label{lemma3}
		Let $n=2^k n_0$, $k\ge0$, $n_0$ odd, $u\in C_0$ and $a=\mathbf{1}$.  If  $P(x) = (x\ggg u) + (x\ggg v) \cdot (x + a)$   is a permutation of $\mathbb{F}_2^n$,  
		then $2^k\mid v$.
	\end{lemma}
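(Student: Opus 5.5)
The plan is to argue by contraposition. Assume $2^k\nmid v$ and exhibit a nonzero $\delta$ for which Eqn. (\ref{eq:pbcond}) has a solution $x$; then $P$ is not a permutation. The natural candidate is the indicator vector of a single orbit $C_0$, with case (b) above doing the work.

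First I reduce the hypothesis $2^k\nmid v$ to a parity statement about $l$. Write $d=\gcd(v,n)$ and $n=dl$. Since $d\mid v$ and $2^k\nmid v$, the $2$-adic valuation of $d$ is at most that of $v$, hence strictly less than $k$. Because $n=dl$ has $2$-adic valuation exactly $k$, it follows that $l=|C_0|$ is even.

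Next I take $\delta$ with $\delta_{C_0}=\mathbf{1}$ and $\delta_t=0$ for $t\notin C_0$. I check that Eqn. (\ref{eq:pbcond}) splits correctly for this $\delta$.
\begin{itemize}
\item For $i\notin C_0$, the indices $i$, $i+v$ and $i+u$ all lie outside $C_0$. Here I use $u\in C_0$, i.e.\ $d\mid u$, so adding $u$ or $v$ preserves the residue class mod $d$. Every $\delta$ appearing in the equation is therefore $0$, and the equation reads $0=0$.
\item For $i\in C_0$, all of $\delta_i$, $\delta_{i+v}$, $\delta_{i+u}$ equal $1$, and $a=\mathbf{1}$. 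In the notation $x_{0,t}$ the equation becomes
\[
x_{0,t}+x_{0,t+1}=1+1+1=1,\qquad t=0,1,\dots,l-1,
\]
with $t$ taken modulo $l$.
\end{itemize}
This is precisely case (b) with $j=i$, so by Eqn. (\ref{eq:cond2_}) the system is solvable iff $\sum_t a_{0,t}=l\equiv 0 \pmod 2$. One can also see this directly: summing the $l$ equations makes the left side vanish and leaves the constraint $l\equiv 0$. When $l$ is even, the alternating assignment $x_{0,t}=t \bmod 2$ is an explicit solution, and the remaining coordinates of $x$ are arbitrary.

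Since $l$ is even, Eqn. (\ref{eq:pcond}) has a solution for this nonzero $\delta$. Hence $P(x+\delta)=P(x)$ for some $x$, contradicting bijectivity, and so $2^k\mid v$.

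There is no serious obstacle. The only point needing care is the verification that the equations indexed outside $C_0$ are trivial, which is exactly where $u\in C_0$ (from Lemma \ref{lemma1}) is used. The degenerate case $l=2$ is consistent with case (a) and needs no separate treatment.
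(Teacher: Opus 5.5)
Your proof is correct and follows the paper's argument: both show that $2^k\nmid v$ forces $l=n/d$ to be even, then apply case (b), where with $a=\mathbf{1}$ the equations on an orbit become $x_{i,t}+x_{i,t+1}=1$, which are solvable exactly when $l$ is even. The only difference is that the paper takes $\delta=\mathbf{1}$ on all of $\mathbb{Z}_n$, so every equation of Eqn.~(\ref{eq:pbcond}) reads $x_i+x_{i+v}=1$; this makes your check that the equations with $i\notin C_0$ are trivial unnecessary, and that check is the only place you use $u\in C_0$.
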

	\begin{proof}
		If $2^k\nmid v$, then $2^k \nmid d$ and thus $ l=\frac{n}{d}$ is even. Take $\delta=\mathbf{1}$,   Eqn. (\ref{eq:cond2_}) holds. Hence 
	 Eqn. (\ref{eq:pbcond}) has at least one solution, a contradiction.  Hence the conclusion holds.
	\end{proof}

\medskip	
	
In summary,  if $P(x) =  (x\ggg u) + (x\ggg v) \cdot (x + a)$ permutes  $\mathbb{F}_{2}^n$ such that $n=2^k n_0$ and $n_0$ is odd, then it is necessary to assume that  $u = tv ~\text{mod}~n$ for some $t$,  $a=\mathbf{1}$, and $2^k \mid v$.  Now we are ready to prove our main result. 	
	
	\begin{theorem} \label{th:p1}
		Let $n=2^k n_0$, $k\ge 0$, $n_0$ odd, and $1\le u, v<n$. Then
		\begin{equation} \label{eq:thp1}
		P(x) = (x\ggg u)+(x\ggg v)\cdot (x+a)
		\end{equation}
		is a permutation over $\mathbb{F}_2^n$ if and only if the following three conditions holds:
		\begin{enumerate}
			\item $2^k \mid v$;
			\item $u= 2v\text{ mod }n$ or $u=(n-1)v \text{ mod }n$;
			\item  $a=\mathbf{1}$.  
		\end{enumerate}
	\end{theorem}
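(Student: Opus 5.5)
Necessity is already mostly in hand. By Lemma \ref{lemma1}, $u\in C_0$; by Lemma \ref{lemma2}, $a=\mathbf{1}$; by Lemma \ref{lemma3}, $2^k\mid v$. What remains is to show $u\equiv 2v$ or $u\equiv -v \pmod n$, and then sufficiency.

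\textbf{Reduction to a single cycle.} With $u\in C_0$ we may write $u\equiv sv \pmod n$ for some $0\le s<l$, where $l=n/d$. The system (\ref{eq:pbcond}) then splits along the cosets $C_i$. On each coset, relabelling by $c_{i,t}$ turns the map into
\[
y_t=x_{t+s}+x_{t+1}(x_t+1)
\]
on $\mathbb{F}_2^l$, with indices mod $l$. Since $2^k\mid v$, $l$ is odd. So the theorem reduces to this statement: for odd $l$, this cyclic map is a permutation iff $s\equiv 2$ or $s\equiv -1 \pmod l$ (the case $l=1$ being trivial).

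\textbf{Necessity of $s\in\{2,-1\}$.} I would exhibit a nonzero $\delta$ solving (\ref{eq:ci}) with $a=\mathbf{1}$, using cases (c) and (d).
\begin{itemize}
\item With $a=\mathbf 1$, case (c) needs $\delta_{t+s-1}+\delta_{t+s}=1$ at the single zero position.
\item Case (d) needs zeros of $\delta$ at positions $s+k,\dots,s+k+r-2$ for each zero block $k,\dots,k+r-1$ of length $r\ge2$. Pairs $\delta_t=\delta_{t+1}=1$ also impose parity conditions, as in (b).
\end{itemize}
For $s=0$ take $\delta=\mathbf 1$: condition (\ref{eq:cond2_}) needs $\sum a=l\equiv0$, which fails, so this needs separate treatment. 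Instead, take $\delta$ with a single $1$; the equations then give a solution directly, since $x_{i}$ terms decouple. The case $s=1$ is handled the same way.

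For $3\le s\le l-2$ the natural witness is a $\delta$ whose zero set $Z$ is shift-stable, $Z+s-1\subseteq Z$ or $Z+s\subseteq Z$, with the ones isolated. This is exactly the combinatorics of Lemma \ref{lem:usigma}. Take the cycle $\sigma_0,\dots,\sigma_{r-1}$ it provides, and let $\delta$ be $1$ exactly on the set $\{\sigma_j\}$, or its complement. The condition $\sigma_i\ne\sigma_j\pm1$ says the ones are isolated, so every zero block has length $\ge2$ or is a single zero between ones. The step rule $\sigma_{j+1}-\sigma_j\in\{s-1,s\}$ makes the shifted-zero conditions (\ref{eq:cond4}) and (\ref{eq:cond3}) hold consistently. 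Checking that the constraints from consecutive blocks are compatible, i.e.\ that the linear system is consistent, is the \textbf{main obstacle}: one must match each block's requirement $\delta_{s+k..s+k+r-2}=0$ against the cycle structure. I would do this by translating ``$\sigma_j$ is a one'' into ``$\sigma_j+s-1$ or $\sigma_j+s$ is the next one''. The boundary values $n\ge5$ in the lemma correspond to small $l$ ($l=3$, where every $s$ is $0$, $2$ or $-1$), which are checked directly.

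\textbf{Sufficiency.} For $s=2$ the map is $y_t=x_t+x_{t+1}x_{t+2}+x_{t+2}$ up to a shift, namely $\chi_l$ with $l$ odd, a known permutation \cite{Daemen1995}. For $s=-1$, write $y_{t+1}=x_t+x_{t+2}(x_{t+1}+1)$. Reversing the index order ($t\mapsto -t$) maps it to $\chi_l$ again, so it is bijective. Since $P$ acts independently on the cosets $C_i$, it is a permutation of $\mathbb{F}_2^n$.
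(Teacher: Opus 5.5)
\textbf{Verdict: genuine gap.} The reduction is correct, but your necessity argument for $3\le s\le l-2$ is unfinished, and the version you lean towards fails. Your sufficiency argument also has its two identifications with $\chi_l$ swapped.

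\textbf{What is correct.} Your reduction is cleaner than the paper's ``without loss of generality $d=1$''. With $a=\mathbf 1$ and $u\equiv sv$, $P$ is a product of $d$ copies of $Q_s(x)_t=x_{t+s}+x_{t+1}(x_t+1)$ on $\mathbb{F}_2^l$, with $l$ odd. Your single-$1$ witness for $s=1$ covers the case $u=v$, which the paper's write-up passes over. Two small corrections: $s=0$ cannot occur, since $1\le u<n$; and for $l=3$ the only values are $s=1$ and $s=2\equiv-1$, not $0$, $2$ or $-1$.

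\textbf{The gap for $3\le s\le l-2$.} You leave this case as the ``main obstacle'', with $\delta$ equal to $1$ on $\Sigma=\{\sigma_0,\dots,\sigma_{r-1}\}$ ``or its complement''. The version you actually argue for (isolated ones, zero blocks of length $\ge2$) fails. Take $l=5$, $s=3$; Lemma~\ref{lem:usigma} gives the cycle $0\to2\to0$. Then $\delta=(1,0,1,0,0)$ has an isolated zero at position $1$. There (\ref{eq:cond3}) with $a=\mathbf 1$ demands $\delta_3+\delta_4=1$, but both are $0$. Directly, the equations (\ref{eq:ci}) at $t=0$ and $t=1$ force $x_1=0$ and $x_1=1$. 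With ones on $\Sigma$, the step rule does give (\ref{eq:cond4}) on the long zero blocks, but it says nothing about a single zero lying between two elements of $\Sigma$.

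\textbf{The fix.} The choice that works, and the one the paper makes, is $\delta=0$ exactly on $\Sigma$ and $1$ elsewhere.
Every zero is then isolated, because $\sigma_i\ne\sigma_j\pm1$, so (\ref{eq:cond4}) never arises.
Since $\delta\ne\mathbf 1$, each run of ones only yields a path of equations $x_t+x_{t+1}=\delta_{t+s}$. A path is always solvable, so no parity condition arises either.
Finally, (\ref{eq:cond3}) at $\sigma\in\Sigma$ reads $\delta_{\sigma+s-1}+\delta_{\sigma+s}=1$. This holds because the successor of $\sigma$ in the cycle is one of $\sigma+s-1,\sigma+s$, and the other is adjacent to that successor, hence not in $\Sigma$.
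In the example, $\delta=(0,1,0,1,1)$ gives the collision $Q_3(0,0,1,1,0)=Q_3(0,1,1,0,1)$.

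\textbf{Sufficiency: the identifications are swapped.}
\begin{itemize}
\item For $s=-1$, the formula you write, $y_{t+1}=x_t+x_{t+2}(x_{t+1}+1)$, already is $\chi_l$ up to a shift of the output. Reversing the indices would destroy this.
\item For $s=2$ and $l\ge5$, $Q_2(x)_t=x_{t+2}+x_tx_{t+1}+x_{t+1}$ is not a shift of $\chi_l$. Its linear terms sit at the later index of the quadratic pair and the index after it. In $\chi_l$ they sit at the index before the pair and the later index of the pair.
\item Reversing (set $z_t=x_{-t}$ and read off $y_{-t-2}$) gives $z_t+z_{t+1}(z_{t+2}+1)$. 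You must additionally complement all inputs and outputs to reach $\chi_l$; this is the elementary equivalence noted in the paper's remark.
\end{itemize}
With that fix, reducing sufficiency to Daemen's theorem is a valid alternative to the paper's direct differential argument for $s=2$.
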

	
	\begin{proof}
		We first prove the necessity.  Assume that $P(x)$ is a permutation.  By Lemmas~\ref{lemma1},  \ref{lemma2}, \ref{lemma3},  we  have $u=s v\mod n$, $2^k\mid v$ and $a=\mathbf{1}$. Now we further prove $u=n-v$ or $u=2v\text{ mod }n$. We assume that $u\ne n-v$ and $u \ne 2v\text{ mod }n$. Since Eqn. (\ref{eq:pbcond}) is divided into $d$ independent subsystems $E_i$ that all indexes of  its parameters are in $C_i$, we need to construct a nonzero $\delta$ such that each $E_i$ has solutions. 	Note that $|C_i| = l = \frac{n}{d}$ is an odd for any $i$ and we only need to prove that any one of them has solutions. For this reason, we can assume that without loss of generality  $n$ is odd and $d=(v, n)=1$. Let $u=s v\mod n$. For any $3\le s\le n-2$, by Lemma \ref{lem:usigma}, there exists a cycle
				\[
			0=\sigma_0\rightarrow\sigma_1\rightarrow\cdots\rightarrow\sigma_r=0
		\]
		such that $\sigma_i\ne \sigma_j\pm 1\mod n$ for any $0\le i\ne j<r$. We take
		\[
			\delta_{kv} = \left\{\begin{array}{ll}
				0 & \text{if } k\in\{\sigma_0,\sigma_1,\cdots,\sigma_{r-1}\}, \\
				1 &  \text{others}.
			\end{array}\right.
		\]
		and obtain
		\[
		1=\delta_{\sigma_0 v}+\delta_{(\sigma_0+1)v}=\delta_{\sigma_1v}+\delta_{(\sigma_1+1)v}=\cdots=\delta_{\sigma_{r-1}v}+\delta_{(\sigma_{r-1}+1)v}
		\]
		In this case, $\delta$ has a zero entry such that the entries before and after are both nonzero. 
		Similar to the arguments in proving Eqn. (\ref{eq:cond3}), Eqn. (\ref{eq:pbcond}) has solutions, contradicts to that $P(x)$ is a permutation.  Thus $s$ must be equal to $2$ or $n-1$.
		
		Conversely,  we prove that $P(x)$ is a permutation  under the above assumptions. Since the proof of the case $s=n-1$ is entirely similar to that of the case $s=2$, here we only prove the case $s=2$. We proceed with the proof by contradiction.   Assume that Eqn. (\ref{eq:pbcond}) has solutions. 	Then there exists a  fixed index $i \in D_j$ such that  the subsystem Eqn~(\ref{eq:ci}) has  solutions.  In this case $|C_i|= l=\frac{n}{d}$ is odd.  By Eqn. (\ref{eq:cond2}), we have $\delta_{C_i}\ne\mathbf{1}$ due to the odd $|C_i
|$.  Hence $\delta \ne \mathbf{1}$.  For simplicity, we rearrange all bits of $\delta$ as $\delta_0 \delta_{v}\delta_{2v}\cdots \delta_{(n-1)v}$ and still write it as $\delta$. Suppose that there are two adjacent zeros in $\delta$. Without a loss of generality, let $\delta_0=\delta_1=0$. Then by Eqn. (\ref{eq:pbcond}), we have $\delta_2=0$, and then $\delta_k=0$ for all $k\ge3$, a contradiction.  Hence $\delta$ must have the form $10101\cdots01$. But it leads to a contradiction $1=a_{n-1}=\delta_{n-1}+\delta_0=0$, which implies that Eqn. (\ref{eq:pbcond}) has no solution for any nonzero $\delta$. 
	\end{proof}

\begin{corollary} \label{cor3}
	Let $n=2^k n_0$, $n_0$ odd and $\mathcal{N}_n$ be the number of permutations $P(x)$ of the form  (\ref{eq:thp1}). Then $\mathcal{N}_n=2n_0-4$ if $3|n_0$ and $2n_0-2$ if $3\nmid n_0$.
\end{corollary}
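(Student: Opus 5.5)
The plan is to apply Theorem~\ref{th:p1} directly and count the admissible parameter pairs $(u,v)$, since condition (3) fixes $a=\mathbf{1}$ uniquely. A permutation of the form (\ref{eq:thp1}) is therefore determined by a pair $(u,v)$ with $1\le u,v<n$ satisfying conditions (1) and (2). So $\mathcal{N}_n$ equals the number of such pairs, and I will count them by first fixing $v$ and then counting the possible $u$.

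\emph{Step 1: the choices of $v$.} By condition (1), $v=2^k m$ with $1\le m\le n_0-1$, which gives exactly $n_0-1$ values of $v$.

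\emph{Step 2: both candidates for $u$ are admissible.} For each such $v$ the candidates are $u\equiv 2v$ and $u\equiv -v \pmod n$, reduced into $[0,n)$. I will check that neither residue is $0$, so that the requirement $1\le u<n$ is met:
\begin{itemize}
\item $-v\equiv 0 \pmod n$ would force $n\mid v$, which is impossible since $1\le v<n$.
\item $2v=2^{k+1}m\equiv 0\pmod{2^k n_0}$ means $n_0\mid 2m$. Since $n_0$ is odd this gives $n_0\mid m$, which is impossible since $1\le m<n_0$.
\end{itemize}
Hence each $v$ yields two admissible values of $u$, counted with multiplicity.

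\emph{Step 3: coincidences of the two candidates.} The two residues coincide exactly when $3v\equiv 0\pmod n$, that is, $2^k n_0\mid 3\cdot 2^k m$, equivalently $n_0\mid 3m$.
\begin{itemize}
\item If $3\nmid n_0$, then $\gcd(n_0,3)=1$, so this forces $n_0\mid m$, which is impossible. There are no coincidences, and $\mathcal{N}_n=2(n_0-1)=2n_0-2$.
\item If $3\mid n_0$, the condition becomes $(n_0/3)\mid m$, i.e.\ $m\in\{n_0/3,\,2n_0/3\}$. These are two values of $v$, and for each the two candidates $u$ collapse into one. This gives $\mathcal{N}_n=2(n_0-1)-2=2n_0-4$.
\end{itemize}

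There is no real obstacle: all the substance is already in Theorem~\ref{th:p1}. The only points needing care are the nonvanishing of $2v \bmod n$, which uses that $n_0$ is odd, and the exact count of coincidences $3v\equiv 0\pmod n$.

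The degenerate case $n_0=1$ is consistent with the formula: there are no admissible $v$, and $3\nmid 1$ gives $2n_0-2=0$.
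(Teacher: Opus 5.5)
Your proof is correct and follows the same route as the paper: you write $v=2^k m$ with $1\le m\le n_0-1$, give each $v$ the two candidates $u\equiv 2v$ and $u\equiv -v \pmod n$, and subtract the coincidences $3v\equiv 0\pmod n$. Your version is more careful than the paper's written argument, because you check that neither candidate for $u$ is $0$ and you find both coincident values $m=n_0/3$ and $m=2n_0/3$; the paper names only $v_0=n_0/3$, although its count $2n_0-4$ implicitly subtracts both.
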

\begin{proof}
	By Theorem \ref{th:p1}, we let $v=2^k v_0$ and $u=2^k u_0$. Then $u_0=2v_0\mod n_0$ or $n_0-v_0$. The conclusion follows from that $v_0$ can be taken from 1 to $n_0-1$ and there are two corresponding $u_0$ for each $v_0$,  except that if $3 \mid n_0$ and $v_0 = \frac{n_0}{3}$  then   $2v_0 = n- v_0 = \frac{2n_0}{3}$. 
\end{proof}

\begin{corollary}
	If $n$ is a power of 2, then $P(x)$ is not a permutation for any $u, v$ and $a$.
\end{corollary}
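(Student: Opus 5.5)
This corollary follows from Theorem~\ref{th:p1} alone. Suppose $n=2^k$, so that $n_0=1$, and suppose $P(x)=(x\ggg u)+(x\ggg v)\cdot(x+a)$ is a permutation for some admissible $u,v$ and $a$. Condition (1) of Theorem~\ref{th:p1} then forces $2^k\mid v$, that is, $n\mid v$. This is incompatible with the standing range $1\le v<n$.

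Equivalently, one can count. By Corollary~\ref{cor3} with $n_0=1$, and since $3\nmid 1$, we get $\mathcal{N}_n=2n_0-2=0$. The proof of Corollary~\ref{cor3} confirms this directly: writing $v=2^kv_0$ requires $1\le v_0\le n_0-1=0$, so no $v_0$ is available.

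The remaining step is to check that the reduction at the start of Section~2 loses no cases. That reduction normalizes $w=0$ and $u,v\ge 1$ in the general form $y_i=x_{i+u}+x_{i+v}(x_{i+w}+a_i)$, so one must confirm that "any $u,v$" is covered.

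The degenerate possibility is $v\equiv w \pmod n$, i.e.\ $v=0$ after normalization. Then $y=(x\ggg u)+x\cdot(x+a)=(x\ggg u)+(\mathbf 1+a)\cdot x$, using $x_i^2=x_i$. This map is linear over $\mathbb{F}_2$, and it is invertible for suitable $a$ (e.g.\ $a=\mathbf 1$), so it is not covered by the corollary. The statement is therefore to be read within the paper's standing assumption $1\le v<n$, equivalently $v\not\equiv w$, under which the map is genuinely quadratic.

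The case $u=0$ ($u\equiv w$) is handled by Lemmas~\ref{lemma1}--\ref{lemma3}, whose arguments do not use $u\ne 0$. In particular Lemma~\ref{lemma3} already gives $2^k\mid v$, a contradiction when $n=2^k$.

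With this remark on the range of $v$, the corollary is immediate and no step is a real obstacle.
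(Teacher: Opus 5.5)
Your proof is correct and is the paper's own (unstated) argument: with $n=2^k$ we have $n_0=1$, so condition (1) of Theorem~\ref{th:p1} forces $n\mid v$, which is impossible for $1\le v<n$ (equivalently $\mathcal{N}_n=2n_0-2=0$ in Corollary~\ref{cor3}); your observation that the claim really needs $v\not\equiv w$ is also right, since for $v\equiv w$ and $a=\mathbf{1}$ the map collapses to the shift $x\ggg u$. One small correction to your side remark on $u=0$: the $d=1$, $u\ne v$ branch of Lemma~\ref{lemma2} does tacitly use $u\ne 0$ (for $u=0$ one gets $\delta_{i-v}+\delta_{i}=1$, so you must choose $i$ with $a_i=1$, as in the $u=v$ branch), though this lies outside the standing range $1\le u<n$ and does not affect the corollary.
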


Finally, it is straightforward to obtain the following result. 

	\begin{corollary} \label{cor2}
		Let $n=2^k n_0$, $k\ge 0$, $n_0$ odd, and $1\le u, v, w<n$. Then
		\begin{equation} 
		P(x) = (x\ggg u)+(x\ggg v)\cdot (x\ggg w +a)
		\end{equation}
		is a permutation over $\mathbb{F}_2^n$ if and only if the following three conditions holds:
		\begin{enumerate}
			\item $2^k \mid (v-w)$;
			\item $u-w= 2(v-w)\text{ mod }n$ or $u-w=-(v-w) \text{ mod }n$;
			\item  $a=\mathbf{1}$.  
		\end{enumerate}
	\end{corollary}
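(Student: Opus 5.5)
The plan is to reduce Corollary~\ref{cor2} to Theorem~\ref{th:p1} with a change of variables that absorbs the shift $w$. Let $S(x)=x\ggg w$, which is a linear bijection of $\mathbb{F}_2^n$. Then $P$ is a permutation if and only if $P\circ S^{-1}$ is one. Substituting $x=z\lll w$ gives
\[
P(z\lll w)=(z\ggg (u-w))+(z\ggg (v-w))\cdot(z+a),
\]
with shifts read modulo $n$. This is exactly the shape (\ref{eq:thp1}) with $u'=u-w \bmod n$, $v'=v-w\bmod n$ and the same vector $a$. The componentwise product is preserved because shifting commutes with the coordinatewise product. Note that $a$ is not shifted here, since $a$ multiplies $x\ggg w$ coordinatewise, so condition (3) stays $a=\mathbf{1}$.

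Theorem~\ref{th:p1} is stated for $1\le u',v'<n$, so the degenerate values $u'=0$ or $v'=0$ must be handled directly.

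\emph{If $v'=0$:} the map is $y=(z\ggg u')+z\cdot(z+a)=(z\ggg u')+z\cdot(\mathbf{1}+a)$, using $z_i^2=z_i$. This is affine linear in $z$. The conditions of the corollary fail: $2^k\mid 0$ holds, but condition (2) then reads $u'=0$, and condition (3) would make the map $z\mapsto z\ggg 0 + 0=z$, which is a permutation. So this edge case needs care. I expect the intended reading is $v\neq w$; otherwise one checks the linear map directly, and I would flag the case $u=v=w$, $a=\mathbf{1}$ as a trivial identity.

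\emph{If $u'=0$ and $v'\neq0$:} Lemma~\ref{lemma1} still applies, since $0\in C_0$. Lemmas~\ref{lemma2} and \ref{lemma3} also go through with $u=0$. The necessity argument of Theorem~\ref{th:p1} then rules out $s=0$, because $0\ne 2$ and $0\ne n-1$ as multiples of $v'$. The cases $s\in\{0,1\}$ should be checked separately, using the analysis of case (c) as in Lemma~\ref{lemma2}.

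With these points settled, translating the conditions $2^k\mid v'$, $u'\equiv 2v'$ or $u'\equiv -v' \pmod n$, and $a=\mathbf{1}$ back gives exactly (1)--(3). The main obstacle is only this bookkeeping of boundary cases ($u'$ or $v'$ equal to $0$ modulo $n$), since the range restriction $1\le u,v<n$ of Theorem~\ref{th:p1} does not literally cover them. The substantive content is entirely Theorem~\ref{th:p1}.
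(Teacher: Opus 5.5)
Your reduction is exactly the paper's argument. The normalization at the start of Section~2 ($P$ is a permutation iff $x\mapsto P(x\lll w)$ is) yields the shape (\ref{eq:thp1}) with $u'=u-w$, $v'=v-w$ and unchanged $a$. Since $2^k\mid n$, the conditions of Theorem~\ref{th:p1} translate exactly into (1)--(3).

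The paper calls this straightforward and never mentions $u'=0$ or $v'=0$. You are right to raise these cases, and right that $v\neq w$ must be intended, but for the wrong reason. The case you flag, $u=v=w$ with $a=\mathbf 1$, is harmless: (1)--(3) hold and $P$ is the identity, as the corollary predicts.

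The real failure is $v=w\neq u$. With $a=\mathbf 1$ one has $(x\ggg w)\cdot\bigl((x\ggg w)+\mathbf 1\bigr)=\mathbf 0$, so $P(x)=x\ggg u$ is a cyclic shift and hence a permutation. Yet (2) forces $u\equiv w\pmod n$. Likewise, $a=\mathbf 1+e_0$ ($e_0$ the $0$-th unit vector) gives the invertible linear map $y_0=z_{u'}+z_0$, $y_i=z_{i+u'}$ for $i\neq 0$, which violates (3). So the ``only if'' direction is false for $v=w$. The corollary needs the extra hypothesis $v\neq w$, which is implicit in the normalization $w=0$, $v\ge 1$ of Theorem~\ref{th:p1}. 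Your sentence ``the conditions of the corollary fail'' does not establish anything here.

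For $u=w\neq v$, conditions (1)--(3) can never hold: $2v'\equiv 0\pmod n$ forces $v'=n/2$, contradicting $2^k\mid v'$. So you must show $P$ is never a permutation. Your justification that the necessity argument ``rules out $s=0$ because $0\neq 2$ and $0\neq n-1$'' is not valid. That argument rests on Lemma~\ref{lem:usigma}, which needs $3\le s\le n-2$. (The case $s=1$ is already inside the range of Theorem~\ref{th:p1}, so only $s=0$ is genuinely new.)

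The case closes quickly, though:
\begin{itemize}
\item If $a=\mathbf 0$, take $\delta=\mathbf 1$ as in Lemma~\ref{lemma2}.
\item Otherwise pick $c$ with $a_c=1$ and let $\delta$ be the indicator of $\{c+tv' \bmod n : 1\le t< n/\gcd(v',n)\}$; this is nonzero since $v'\not\equiv 0$. With $u'=0$ the case (c) condition reads $a_c=\delta_{c-v'}+\delta_c=1$, which holds by the choice of $c$. So Eqn.~(\ref{eq:pbcond}) is solvable for this nonzero $\delta$.
\end{itemize}
For $a=\mathbf 1$ you can also see it directly: $P(z)_i=z_i\vee z_{i+v'}$, so $P(z)\ge z$ coordinatewise. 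A bijection with this property preserves weights and must be the identity, yet $P(e_0)=e_0+e_{-v'}\neq e_0$.
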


\begin{remark}
We note that  $ \chi_{n,v}$ can be viewed a stretching  of  the original  map $\chi$ in the sense of  Proposition 1.3 in  \cite{HauglandOmland26}, which   considers
$f_s(x_1, ..., x_{(k-1)s+1}) = f(x_1, x_{s+1}, ..., x_{(k-1)s+1})$ for a local rule $f$ of diameter $k$ and an integer $s$ at least $2$.   The proof splits $\mathbb{F}_2^n$ into $\gcd(n,v)$ interleaved tracks of length $n/\gcd(n,v)$, and the induced map acts as $\chi$ on each track separately. Together with Daemen's criterion this gives that $\chi_{n,v}$ permutes $\mathbb{F}_2^n$ if and only if $n/\gcd(n,v)$ is odd, which is the condition $2^k \mid v$. So our  permutations in even dimension are interleaved copies of $\chi$ on odd tracks.  This distinguishes them from the siblings of Kriepke and Kyureghyan \cite{KK25}, which do not decompose in this way.
\end{remark}

\begin{remark}
The notion of  {\it elementary equivalence} was introduced in  Definition 2.10 in \cite{HauglandOmland25}.   We note that  $\chi_{n,v}$ and $\chi_{n,-2v}$ are elementary equivalent in their sense. 
One is obtained from the other by reversing the order of the variables, complementing them, and adding $1$. Elementary equivalent rules induce maps with identical cryptographic parameters. 
\end{remark}

\begin{remark}
The behaviour of all iterates of $\chi_{n, -2v}$  and $\chi_{n, v}$ is similar. 
\end{remark}

\begin{remark}
Theorem ~\ref{th:p1} and Corollary \ref{cor3} completely characterize 
these shift-invariant permutations of algebraic degree $2$ of the  shape
$y_i = x_{i+u} + x_{i+v}(x_{i+w} + a_i)$,
with a single quadratic monomial.
There are quadratic shift-invariant permutations outside our shape. The appendix of \cite{HauglandOmland25} lists four rules of algebraic degree $2$ and diameter $6$ that are
 permutations of $\mathbb{F}_2^n$ for $n$ in $\{11, 13, 15, 17, 19\}$, for instance  $x_1 + x_2 + x_3 + x_5 + x_2x_3 + x_4x_5 + x_5x_6$.
\end{remark}

\section{Acknowledgement}
We want to thank Tron Omland  for very helpful suggestions and comments on the connection with the references \cite{HauglandOmland25, HauglandOmland26}.

\end{document}